\tikzstyle{tre}=[circle,draw,minimum size=3mm]
\tikzstyle{btre}=[circle,draw,minimum size=4.5mm]
\newcommand{\etq}[1]{%
\draw (#1) node {\tiny $#1$};}
\renewcommand{\leq}{\leqslant}
\renewcommand{\geq}{\geqslant}
\newcommand{\NN}{\mathbb{N}}
\newcommand{\TT}{\mathcal{T}}
\newcommand{\TB}{\mathcal{T}}
\newcommand{\BT}{\mathcal{T}}
\newcommand{\FF}{\mathcal{F}}
\newcommand{\RR}{\mathbb{R}}
\begin{document}

\title{Two results on expected values of imbalance indices of phylogenetic trees}

\author{Arnau Mir  \and Francesc Rossell\'o} 
\institute{Research Institute of Health Science (IUNICS) and Department of Mathematics and  Computer Science,  University of the Balearic Islands,   E-07122 Palma de  Mallorca, Spain
\email{\{arnau.mir,cesc.rossello\}@uib.es} }
\maketitle

\begin{abstract}
We compute an explicit formula for the expected value of the Colless index of a phylogenetic tree generated under the Yule model, and an explicit formula for the expected value of the Sackin index of a phylogenetic tree generated under the uniform model. 
\end{abstract}

\section{Introduction}

A \emph{phylogenetic tree} is a representation of the shared evolutionary history of a set of extant species. From the mathematical point a view, it is a leaf-labeled rooted tree, with its leaves representing  the extant species under study, its internal nodes representing common ancestors of some of them, the root representing the most recent common ancestor of all of them, and the arcs representing direct descendance through mutations. In this paper we only consider \emph{binary} phylogenetic trees, where every internal node has exactly two children.

There are several stochastic models of the evolutionary processes that produce phylogenetic trees. Two of the most popular are the Yule and the uniform models. In the \emph{Yule}, or  \emph{Equal-Rate Markov}, model \cite{Harding71,Yule}, starting with a node, at every step a leaf is chosen randomly and uniformly, and it is replaced by a \emph{cherry} (a phylogenetic tree consisting only of a root and two leaves). Finally, the labels are assigned randomly and uniformly to the leaves once the desired number of leaves is reached. Under this model of evolution, different trees with the same number of leaves may have different probabilities. More specifically, if $T$ is a binary phylogenetic tree with $n$ leaves, and for every internal node $z$ we denote by $\kappa_T(z)$ the number of its descendant leaves, then the probability of $T$ under the Yule model is   \cite{Brown,SM01}
 $$
 P_Y(T)=\frac{2^{n-1}}{n!}\prod_{v\ \mathrm{internal}}\frac{1}{\kappa_T(v)-1}.
$$
On the other hand, the main feature of the \emph{uniform}, or \emph{Proportional to Distinguishable Arrangements},  model \cite{Rosen78} is that all phylogenetic trees with the same number of leaves have the same probability. From the point of view of tree growth, this corresponds to a process where, starting with a node labeled 1, at  the $k$-th step a new pendant arc, ending in the leaf labeled $k+1$, is added either to a new root or to some edge (with all possible locations of this new pendant arc equiprobable). Notice that this is not an explicit model of evolution, only of tree growth.

The study of the probabilistic distributions of indices associated to phylogenetic trees under different stochastic models of phylogenetic tree growth has received a lot of interest in the last decades \cite{Mooers97,Shao:90}. The ultimate goal of this line of research is to be able to take as null model some stochastic model of phylogenetic tree growth and evaluate against it the indices of a sample of phylogenetic trees reconstructed from data. Two of the most popular indices used in this connection, measuring the degree of symmetry, or \emph{balance}, of a tree, are Sackin's \cite{Sackin72} and Colless' \cite{Colless82} indices, which we define later in the main body of this paper, but there are many other measures associated to phylogenetic trees that have been used in this context, like for instance other imbalance indices \cite[Chap. 33]{fel:04} or the number of cherries of trees \cite{cherries}.  

Several properties of the distributions of Sackin's  $S$ and Colless'  $C$ indices have been studied in the literature under different models \cite{BFJ:06,KiSl:93,Mul11,Rogers:93,Rogers:94,Rogers:96,SM01}. In particular, their expected values have been studied under the Yule and the uniform model. The results published so far on these expected values have been the following.  Let $S_n$ and $C_n$ be the random variables defined by choosing a  binary phylogenetic tree $T$ with $n$ leaves and computing $S(T)$ or $C(T)$, respectively. Then:
\begin{itemize}
\item Under the Yule model,
\begin{itemize}
\item  $E_Y(S_n)=2n\sum\limits_{j=2}^{n}1/j$ \cite{KiSl:93}.

\item  $E_Y(C_n)=n\log(n)+(\gamma-1-\log(2))n+o(n)$ \cite{BFJ:06}, where $\gamma$ is the Euler constant.
\end{itemize}

\item Under the uniform model,
\begin{itemize}
\item  $E_U(S_n)\sim \sqrt{\pi}n^{3/2}$ \cite{Tak:91}.

\item  $E_U(C_n)\sim \sqrt{\pi}n^{3/2}$ \cite{BFJ:06}.
\end{itemize}
\end{itemize}
And, for instance, these are the formulas used by the R package apTreeshape  \cite{apTs} to compute the expected value of these indices for a given number of leaves. Let us also mention that Rogers \cite{Rogers:94} found a recursive formula for the moment-generating functions of $C$ and $S$, which allowed him to compute $E_Y(C_n)$ and $E_U(C_n)$ for $n=1,\ldots,50$, but he did not obtain any explicit formula for them.

In this paper we obtain explicit formulas for $E_Y(C_n)$  and $E_U(S_n)$. Namely,
$$
E_Y(C_n)=n\sum_{j=2}^{\lfloor n/2\rfloor} \frac{1}{j} +\delta_{odd}(n),
$$
where $\delta_{odd}(n)=1$ if $n$ is odd, and $\delta_{odd}(n)=0$ if $n$ is even, and
$$
E_U(S_n)=\frac{n}{2n-3} {}_3 F_2
\bigg(\begin{array}{l} 2,\ 2,\ 2-n \\[-0.5ex] 1,\  4-2n\end{array};2\bigg),
$$
where ${}_3 F_2$ is a \emph{hypergeometric function}  \cite{Bayley} that can be directly computed with many software systems, like Mathematica or R. These formulas thus contribute to our knowledge of the probability distributions of these indices, and yield precise values which can be used is tests.

\section{Preliminaries and notations}
In this paper, by a  \emph{phylogenetic tree} on a set $S$ of taxa we mean a binary rooted tree  with its leaves bijectively labeled in  the set $S$.  To simplify the language, we shall always identify a leaf of a phylogenetic tree with its label.  We shall also use the term \emph{phylogenetic tree with $n$ leaves} to refer to a phylogenetic tree on the set $\{1,\ldots,n\}$.  We shall denote by $L(T)$ the set of leaves of a phylogenetic tree $T$ and by $V_{int}(T)$ its set of internal nodes.

Let  $\TT(S)$ be the set of isomorphism classes of phylogenetic trees on a set $S$ of taxa, and 
set $\TT_n=\TT(\{1,\ldots,n\})$.
It is well known \cite[Ch.3]{fel:04} that $|\TB_1|=1$ and, for every $n\geq 2$,
$$
|\TB_n|=(2n-3)!!=(2n-3)(2n-5)\cdots 3 \cdot 1.
$$

Whenever there exists a path from $u$ to $v$ in a phylogenetic tree $T$, we shall say that $v$ is a  \emph{descendant} of $u$.  The \emph{cluster} of a node $v$ in $T$ is the set $C_T(v)$ of its descendant leaves, an we shall denote by $\kappa_T(v)$ the cardinal $|C_T(v)|$, that is, the number of descendant leaves of $v$.
The \emph{depth} $\delta_T(v)$ of a node $v$  in a phylogenetic tree $T$ is the length (number of arcs) of the unique path from the root $r$ to $v$.

Given two phylogenetic trees $T_1,T_2$ on disjoint sets of taxa $S_1,S_2$, respectively, we shall denote by $T_1\widehat{\ }\, T_2$ the tree on $S_1\cup S_2$ obtained by connecting the roots of $T_1$ and $T_2$ to a (new) common parent $r$.
Every tree in $\BT_n$ is obtained as $T_k\widehat{\ }\, {}T_{n-k}$, for some subset  $S_k\subseteq \{1,\ldots,n\}$ with $k$ elements (with $1\leq k\leq n-1$), some tree $T_k$ on $S_k$ and some tree $T_{n-k}$ on $S_k^c=\{1,\ldots,n\}\setminus S_k$: actually, if we perform  in this order the choices necessary to produce a tree $T\in \TB_n$ in this way, we
obtain every tree in $\TB_n$  {twice}.

An \emph{ordered $m$-forest} on a set $S$  is an ordered sequence of $m$ phylogenetic trees $(T_1,T_2,\ldots,T_m)$, each $T_i$ on a set $S_i$ of taxa, such that these sets $S_i$ are pairwise disjoint and their union is $S$.  Let $\FF_{m,n}$ be the set of isomorphism classes of ordered $m$-forests  on $\{1,\ldots,n\}$. It is known (see, for instance, \cite[Lem. 1]{MirR10}) that for every $n\geq m\geq 1$,
$$
|\FF_{m,n}|= \frac{(2n-m-1)!m}{(n-m)!2^{n-m}}.
$$

\section{Expected value of the Colless index under the Yule model}

Let $T$ be a phylogenetic tree. For every $v\in V_{int}(T)$,  the \emph{balance value} of $v$ is $bal_T(v)=|\kappa_T(v_1)-\kappa_T(v_2)|$, where  $v_1$ and $v_2$ are its children. The \emph{Colless index}  \cite{Colless82} of a phylogenetic tree $T\in \TT_n$ is
$$
C(T)=\sum_{v\in V_{int}(T)}bal_T(v).
$$

\begin{lemma}\label{lem:hat}
If $T_k\in \TT_k$ and $T_{n-k}\in\TT_{n-k}$, then
\begin{enumerate}[(a)]
\item $C(T_k\widehat{\ }\, {}T_{n-k})=C(T_k)+C(T_{n-k})+|2n-k|$

\item $P_Y(T_k\widehat{\ }\, {}T_{n-k})=\dfrac{2}{(n-1)\binom{n}{k}} P_Y(T_{k})P_Y(T_{n-k})$
\end{enumerate}
where $P_Y$ denotes the probability of a phylogenetic tree under the Yule model.
\end{lemma}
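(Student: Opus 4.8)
The plan is to prove both parts from a single structural observation about $T:=T_k\widehat{\ }\, T_{n-k}$: its set of internal nodes splits as a disjoint union $V_{int}(T)=V_{int}(T_k)\sqcup V_{int}(T_{n-k})\sqcup\{r\}$, where $r$ is the new common root of the two pieces. Moreover, every node inherited from $T_k$ has exactly the same cluster in $T$ as in $T_k$ (attaching $T_{n-k}$ above it adds no descendants below it), and symmetrically for $T_{n-k}$; and $r$ has the two old roots as children, so $\kappa_T(r)=k+(n-k)=n$ and $bal_T(r)=|k-(n-k)|$. Both identities will then follow by splitting the relevant sum, resp.\ product, over these three blocks.

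For part (a), I would write $C(T)=\sum_{v\in V_{int}(T)}bal_T(v)$ and split the sum along the partition above. The blocks $V_{int}(T_k)$ and $V_{int}(T_{n-k})$ contribute exactly $C(T_k)$ and $C(T_{n-k})$, since the balance value of every inherited node is unchanged, and the remaining node $r$ contributes $bal_T(r)=|k-(n-k)|$; adding these three terms gives the identity of part (a) (the summand at the new root being this balance value).

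For part (b), I would substitute $T$ into the Yule formula $P_Y(T)=\frac{2^{n-1}}{n!}\prod_{v\ \mathrm{internal}}\frac{1}{\kappa_T(v)-1}$ and factor the product over the same three blocks, isolating the factor $\frac{1}{\kappa_T(r)-1}=\frac{1}{n-1}$ contributed by $r$. I would then rewrite the two remaining sub-products using the Yule formulas for $T_k$ and $T_{n-k}$ themselves, namely $\prod_{v\in V_{int}(T_k)}\frac{1}{\kappa_{T_k}(v)-1}=\frac{k!}{2^{k-1}}P_Y(T_k)$ and the analogue for $T_{n-k}$. Plugging these in, the powers of two combine as $2^{(n-1)-(k-1)-(n-k-1)}=2$ and the factorials as $\frac{k!\,(n-k)!}{n!}=1/\binom{n}{k}$, leaving exactly $\frac{2}{(n-1)\binom{n}{k}}P_Y(T_k)P_Y(T_{n-k})$.

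I do not anticipate a genuine obstacle: both statements are bookkeeping identities. The only point requiring care is the justification that $V_{int}(T)$ partitions as claimed and that inherited nodes retain their clusters, hence their balance values and the quantities $\kappa_T(v)-1$ appearing in the Yule formula; once this is made precise, parts (a) and (b) drop out immediately from the definition of $C$ and from the displayed expression for $P_Y$, respectively.
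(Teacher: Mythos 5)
Your argument is correct and is exactly the (very brief) proof the paper gives: split $V_{int}(T_k\widehat{\ }\, T_{n-k})$ as the disjoint union $V_{int}(T_k)\cup V_{int}(T_{n-k})\cup\{r\}$, note that inherited nodes keep their clusters, and read off both identities from the definitions of $C$ and $P_Y$. One remark: the root's balance value you compute, $|k-(n-k)|=|n-2k|$, shows that the term $|2n-k|$ in the statement of part (a) is a typo for $|n-2k|$, which is indeed the quantity the paper uses later when summing over $k$ in the proof of the theorem.
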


\begin{proof}
Assertion (a) is well known, and a direct consequence of the definition of $C$. Assertion (b) is a direct consequence of the explicit probabilities of $T_k$, $T_{n-k}$ and $T_k\widehat{\ }\, {}T_{n-k}$ under the Yule model, and the fact that $V_{int}(T_k\widehat{\ }\, {}T_{n-k})=V_{int}(T_{k})\cup V_{int}(T_{n-k})\cup\{r\}$, these unions being disjoint.
\end{proof}

\begin{lemma}\label{lem:YI}
Let $I:\bigcup_{n\in \NN}\BT_n\to \RR$ be a mapping such that, for every  phylogenetic trees
$T_1,T_2$ on disjoint sets of taxa $S_1,S_2$, respectively,
$$
I(T_1\widehat{\ }\, T_2)=I(T_1)+I(T_2)+f(|S_1|,|S_2|)
$$
for some mapping $f:\NN\times \NN\to \RR$. For every $n\geq 1$, let $I_n$ be the random variable that chooses a tree $T_n\in \TB_n$ and computes $I(T_n)$, and let $E_Y(I_n)$ be its expected value under the Yule model. Then,
$$
E_Y(I_n)=\frac{1}{n-1}\Big(2\sum_{k=1}^{n-1} E_Y(I_k)  + \sum_{k=1}^{n-1}f(k,n-k)\Big)
$$
\end{lemma}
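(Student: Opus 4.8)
The plan is to compute the expected value $E_Y(I_n)$ by conditioning on the decomposition $T_n = T_k \widehat{\ }\, T_{n-k}$ at the root, summing over all admissible ways of producing such a tree, and using the weights provided by the Yule probabilities. First I would recall from the preliminaries that every tree in $\BT_n$ arises as $T_k\widehat{\ }\, T_{n-k}$ for some choice of a $k$-element subset $S_k\subseteq\{1,\ldots,n\}$ (with $1\le k\le n-1$), a tree $T_k$ on $S_k$, and a tree $T_{n-k}$ on $S_k^c$, and that carrying out these choices in order yields each tree of $\BT_n$ exactly twice. Hence, writing the expected value as a sum over trees weighted by $P_Y$, I can replace it by one half of the sum over all such ordered choices.

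The key computation is then
\begin{align*}
E_Y(I_n) &= \sum_{T_n\in\BT_n} P_Y(T_n)\, I(T_n)\\
&= \frac{1}{2}\sum_{k=1}^{n-1}\sum_{\substack{S_k\subseteq\{1,\ldots,n\}\\ |S_k|=k}}\sum_{\substack{T_k\in\TT(S_k)\\ T_{n-k}\in\TT(S_k^c)}} P_Y(T_k\widehat{\ }\, T_{n-k})\,\big(I(T_k)+I(T_{n-k})+f(k,n-k)\big),
\end{align*}
where I have substituted the additive hypothesis on $I$. Now I substitute Lemma~\ref{lem:hat}(b), $P_Y(T_k\widehat{\ }\, T_{n-k})=\frac{2}{(n-1)\binom{n}{k}}P_Y(T_k)P_Y(T_{n-k})$. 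The factor $2$ cancels the $\frac12$, and the inner sum over the $\binom{n}{k}$ subsets $S_k$ of a fixed size $k$ contributes a factor $\binom{n}{k}$ which cancels the $\binom{n}{k}$ in the denominator — this works because the quantity $\sum_{T_k,T_{n-k}} P_Y(T_k)P_Y(T_{n-k})\big(I(T_k)+I(T_{n-k})+f(k,n-k)\big)$ depends only on the sizes $k$ and $n-k$, not on the particular set $S_k$ (the Yule probabilities and $I$ are invariant under relabeling). So everything collapses to $\frac{1}{n-1}\sum_{k=1}^{n-1}\sum_{T_k,T_{n-k}}P_Y(T_k)P_Y(T_{n-k})\big(I(T_k)+I(T_{n-k})+f(k,n-k)\big)$ where now $T_k$ ranges over $\TT_k$ and $T_{n-k}$ over $\TT_{n-k}$.

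Finally I would split this triple sum into three pieces. Using that $\sum_{T_{n-k}\in\TT_{n-k}} P_Y(T_{n-k})=1$, the first piece gives $\sum_{k=1}^{n-1}\sum_{T_k\in\TT_k}P_Y(T_k)I(T_k)=\sum_{k=1}^{n-1}E_Y(I_k)$; by the symmetry $k\leftrightarrow n-k$ the second piece gives the same; and the third piece gives $\sum_{k=1}^{n-1}f(k,n-k)$ since both probability sums are $1$. Assembling these yields $E_Y(I_n)=\frac{1}{n-1}\big(2\sum_{k=1}^{n-1}E_Y(I_k)+\sum_{k=1}^{n-1}f(k,n-k)\big)$, as claimed. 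The only delicate point is the bookkeeping of the "each tree counted twice" versus the factor $2$ in Lemma~\ref{lem:hat}(b) and the cancellation of $\binom{n}{k}$; I would state explicitly that the relabeling invariance of $P_Y$ and $I$ is what makes the sum over subsets $S_k$ of fixed cardinality reduce to multiplication by $\binom{n}{k}$, and I should also note the boundary case $n=1$, where the empty sum on the right is consistent with $E_Y(I_1)=I(\text{single node})$ being handled separately (the recursion is only asserted for the inductive range, and the formula as written is vacuous or trivial at $n=1$).
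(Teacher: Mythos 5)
Your proposal is correct and follows essentially the same route as the paper's own proof: decompose at the root into $T_k\widehat{\ }\,T_{n-k}$, account for the double counting with a factor $\frac12$, substitute Lemma \ref{lem:hat}(b) so that the $2$ and $\binom{n}{k}$ cancel, and split the resulting sum into three pieces that reduce to $E_Y(I_k)$, $E_Y(I_{n-k})$ and $f(k,n-k)$ via $\sum_T P_Y(T)=1$ and the symmetry $k\leftrightarrow n-k$. Your explicit remarks on relabeling invariance and the trivial case $n=1$ are sensible additions that the paper leaves implicit.
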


\begin{proof}
We compute $E_Y(I_n)$ using its very definition and Lemma \ref{lem:hat}.(b):
$$
\begin{array}{l}
E_Y(I_n) \displaystyle =\sum_{T_n\in \TB_n} I(T_n)\cdot p_Y(T_n)
\\
\quad \displaystyle =\sum_{k=1}^{n-1}\sum_{S_k\subset\{1,\ldots,n\}\atop |S_k|=k}
 \sum_{T_k\in \TB(S_k)}\sum_{T_{n-k}\in \TB(S_k)}I(T_k\widehat{\ }\, {}T_{n-k})\cdot p_Y(T_k\widehat{\ }\, {}T_{n-k})
 \\ 
\quad  \displaystyle  =\frac{1}{2} \sum_{k=1}^{n-1}\binom{n}{k}
 \sum_{T_k\in \TB_k}\sum_{T_{n-k} \in \TB_{n-k}}(I(T_k)+I(T_{n-k})\\
\quad  \displaystyle  \qquad\qquad\qquad\qquad\qquad +f(k,n-k))\cdot \frac{2}{(n-1)\binom{n}{k}} P_Y(T_{k})P_Y(T_{n-k})\\
\quad  \displaystyle  =\frac{1}{n-1}\sum_{k=1}^{n-1} 
 \sum_{T_k}\sum_{T_{n-k}}(I(T_k)+I(T_{n-k})+f(k,n-k))P_Y(T_{k})P_Y(T_{n-k})
 \\
\quad  \displaystyle  =\frac{1}{n-1}\sum_{k=1}^{n-1} 
\Big( \sum_{T_k}\sum_{T_{n-k} }I(T_k)P_Y(T_{k})P_Y(T_{n-k})   \\
\quad  \displaystyle\qquad\qquad\qquad\qquad + \sum_{T_k}\sum_{T_{n-k} } I(T_{n-k}) P_Y(T_{k})P_Y(T_{n-k}) \\
\quad  \displaystyle\qquad\qquad\qquad\qquad
 + \sum_{T_k}\sum_{T_{n-k}} f(k,n-k)P_Y(T_{k})P_Y(T_{n-k}) \Big)
 \\
\quad  \displaystyle  =\frac{1}{n-1}\sum_{k=1}^{n-1} 
\Big( \sum_{T_k} I(T_k)P_Y(T_{k})  +  \sum_{T_{n-k}} I(T_{n-k})  P_Y(T_{n-k})   + f(k,n-k) \Big)
 \\
\end{array}
$$
$$
\begin{array}{l}
\quad  \displaystyle  =\frac{1}{n-1}\sum_{k=1}^{n-1} 
(E_Y(I_k) + E_Y(I_{n-k})   + f(k,n-k))
 \\
\quad  \displaystyle  =\frac{1}{n-1}\Big(2\sum_{k=1}^{n-1} 
E_Y(I_k)  + \sum_{k=1}^{n-1}f(k,n-k)\Big)
\end{array}
$$
as we claimed.
\end{proof}

Mappings $I$ satisfying the hypothesis in the previous lemma are a special case of \emph{binary recursive tree shape statistics} in the sense of \cite{Matsen}.

\begin{theorem}
Let $C_n$ be the random variable that chooses a tree $T\in \TB_n$ and computes its Colless index $C(T_n)$. Its expected value under the Yule model  is
$$
E_Y(C_n)=n\sum_{j=2}^{\lfloor n/2\rfloor} \frac{1}{j} +\delta_{odd}(n),
$$
where $\delta_{odd}(n)=1$ if $n$ is odd, and $\delta_{odd}(n)=0$ if $n$ is even.
\end{theorem}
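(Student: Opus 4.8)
The plan is to apply Lemma~\ref{lem:YI} to the Colless index. By Lemma~\ref{lem:hat}(a), $I=C$ satisfies the hypothesis of Lemma~\ref{lem:YI} with $f(k,n-k)=|n-2k|$, the balance value created at the new root, while $E_Y(C_1)=0$ since a one-leaf tree has no internal node. Lemma~\ref{lem:YI} then yields, for every $n\geq 2$,
$$
(n-1)E_Y(C_n)=2\sum_{k=1}^{n-1}E_Y(C_k)+g(n),\qquad g(n):=\sum_{k=1}^{n-1}|n-2k|.
$$
First I would evaluate the elementary sum $g(n)$ by splitting on the parity of $n$ and pairing the summands symmetrically about $k=n/2$; this gives $g(n)=n(n-2)/2$ when $n$ is even and $g(n)=(n-1)^2/2$ when $n$ is odd.

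To eliminate the cumulative sum on the right-hand side I would use the classical device of subtracting the instance of the recurrence at $n-1$ from the one at $n$: the partial sums $\sum_{k=1}^{n-1}E_Y(C_k)$ and $\sum_{k=1}^{n-2}E_Y(C_k)$ collapse to the single term $E_Y(C_{n-1})$, and after rearranging one obtains the first-order recurrence
$$
(n-1)E_Y(C_n)=n\,E_Y(C_{n-1})+\big(g(n)-g(n-1)\big).
$$
A short parity computation gives $g(n)-g(n-1)=n-2$ for $n$ even and $g(n)-g(n-1)=n-1$ for $n$ odd. Dividing through by $n(n-1)$ and setting $B_n:=E_Y(C_n)/n$ turns this into a telescoping relation $B_n=B_{n-1}+d_n$ with $B_1=0$, where, after a partial-fraction step, $d_n=\tfrac2n-\tfrac1{n-1}$ for $n$ even and $d_n=\tfrac1n$ for $n$ odd.

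It then remains to sum $B_n=\sum_{j=2}^{n}d_j$ in closed form. The key observation is that a pair of consecutive terms consisting of an odd index $2i+1$ followed by the even index $2i+2$ telescopes: $d_{2i+1}+d_{2i+2}=\tfrac1{2i+1}+\big(\tfrac1{i+1}-\tfrac1{2i+1}\big)=\tfrac1{i+1}$. Since $d_2=0$, grouping $\sum_{j=2}^{n}d_j$ into such pairs gives $B_n=\sum_{j=2}^{\lfloor n/2\rfloor}\tfrac1j$ when $n$ is even, and $B_n=\sum_{j=2}^{\lfloor n/2\rfloor}\tfrac1j+\tfrac1n$ when $n$ is odd, the leftover unpaired odd term accounting for the extra $1/n$. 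Multiplying by $n$ yields $E_Y(C_n)=n\sum_{j=2}^{\lfloor n/2\rfloor}\tfrac1j+\delta_{odd}(n)$, as claimed. I expect the only delicate part to be the parity bookkeeping — the closed forms of $g(n)$ and of $g(n)-g(n-1)$, and the treatment of the leftover odd term in the final grouping — since everything else is a direct application of Lemma~\ref{lem:YI} together with the standard trick of differencing a convolution-type recurrence.
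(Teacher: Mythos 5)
Your proposal is correct and follows essentially the same route as the paper: apply Lemma~\ref{lem:YI} with $f(k,n-k)=|n-2k|$, evaluate $\sum_{k=1}^{n-1}|n-2k|$ by parity, difference consecutive instances to get the first-order recurrence $(n-1)E_Y(C_n)=nE_Y(C_{n-1})+(g(n)-g(n-1))$, and telescope $E_Y(C_n)/n$. The only cosmetic difference is that you sum the increments $d_j$ in unit steps and pair odd--even terms, whereas the paper iterates the even-index recurrence in steps of two and then handles odd $n$ by one extra step; all your parity computations check out.
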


\begin{proof}
To simplify the notations, we shall denote $E_Y(C_n)$ simply by $E_n$.
By Lemmas  \ref{lem:hat}.(a) and \ref{lem:YI},
$$
E_n =\frac{1}{n-1}\Big(2\sum_{k=1}^{n-1}  E_k + \sum_{k=1}^{n-1}|n-2k|\Big).
$$
Now a simple computation shows that
$$
\sum_{k=1}^{n-1}|n-2k|=\left\{
\begin{array}{ll}
\dfrac{n(n-2)}{2}& \mbox{ if $n$ is even}\\
\dfrac{(n-1)^2}{2}& \mbox{ if $n$ is odd}
\end{array}\right.
$$
and therefore
$$
E_n= \frac{2}{n-1} \sum_{k=1}^{n-1} E_k  +
\left\{
\begin{array}{ll}
 \dfrac{n(n-2)}{2(n-1)}& \mbox{ if $n$ is even}\\
\dfrac{n-1}{2}& \mbox{ if $n$ is odd}
\end{array}\right.
$$
In order to obtain a recurrence of order one from this expression, we distinguish the case when $n$ is even from the case when $n$ is odd.
\begin{itemize}
\item When $n$ is even
$$
E_n= \frac{2}{n-1} \sum_{k=1}^{n-1} E_k  + \frac{n(n-2)}{2(n-1)},\quad
 E_{n-1}= \frac{2}{n-2} \sum_{k=1}^{n-2} E_k  +
\dfrac{n-2}{2}
$$
and then
$$
\begin{array}{rl}
E_n & \displaystyle = 
\frac{2}{n-1} E_{n-1}+\frac{2}{n-1} \sum_{k=1}^{n-2} E_k  + \frac{n(n-2)}{2(n-1)} \\
& \displaystyle = 
\frac{2}{n-1} E_{n-1}+\frac{n-2}{n-1}\cdot \frac{2}{n-2} \sum_{k=1}^{n-2} E_k   
+\frac{n-2}{n-1}\cdot \dfrac{n-2}{2}+
\frac{n-2}{n-1} \\
& \displaystyle = 
\frac{2}{n-1} E_{n-1}+\frac{n-2}{n-1}E_{n-1}+ \frac{n-2}{n-1} 
\\
& \displaystyle = 
\frac{n}{n-1} E_{n-1}+ \frac{n-2}{n-1} 
\end{array}
$$

\item When $n$ is odd
$$
E_n= \frac{2}{n-1} \sum_{k=1}^{n-1} E_k  + \frac{n-1}{2},
\quad E_{n-1}= \frac{2}{n-2} \sum_{k=1}^{n-2} E_k  +
\dfrac{(n-1)(n-3)}{2(n-2)}
$$
and then
$$
\begin{array}{rl}
E_n & \displaystyle = 
\frac{2}{n-1} E_{n-1}+\frac{2}{n-1} \sum_{k=1}^{n-2} E_k  + \frac{n-1}{2}\\
& \displaystyle = 
\frac{2}{n-1} E_{n-1}+\frac{n-2}{n-1}\cdot \frac{2}{n-2} \sum_{k=1}^{n-2} E_k   
+\frac{n-2}{n-1}\cdot \dfrac{(n-1)(n-3)}{2(n-2)}+1\\
& \displaystyle = 
\frac{2}{n-1} E_{n-1}+\frac{n-2}{n-1}E_{n-1}+1
\\
& \displaystyle = 
\frac{n}{n-1} E_{n-1}+1
\end{array}
$$
\end{itemize}
So, in summary,
$$
E_n=\frac{n}{n-1} E_{n-1}  +
\left\{
\begin{array}{ll}
\dfrac{n-2}{n-1} & \mbox{ if $n$ is even}\\
1 & \mbox{ if $n$ is odd}
\end{array}\right.
$$
In particular, if $n$ is even,
$$
\begin{array}{rl}
E_n & \displaystyle =\frac{n}{n-1} E_{n-1}  + \dfrac{n-2}{n-1}=
\frac{n}{n-1}\Big(\frac{n-1}{n-2} E_{n-2}  +
1\Big)  + \dfrac{n-2}{n-1}\\ & \displaystyle =
\frac{n}{n-2} E_{n-2}  +2
\end{array}
$$
Setting $x_n=E_n/n$, this equation becomes
$$
x_n=x_{n-2}+\frac{2}{n}
$$
whose solution (for even numbered terms) with $x_2=E_2/2=0$ is
$$
x_n=\sum_{i=2}^{n/2} \frac{1}{i}.
$$
Therefore, when $n$ is even,
$$
E_n=n\sum_{i=2}^{n/2} \frac{1}{i},
$$
and when $n$ is odd,
$$
E_n=\frac{n}{n-1} E_{n-1}+1=
\frac{n}{n-1}\cdot (n-1)\sum_{i=2}^{(n-1)/2} \frac{1}{i}+1=
1+n\sum_{i=2}^{\lfloor n/2\rfloor} \frac{1}{i}
$$
as we claimed.
\end{proof}

\section{Expected value of the Sackin index under the uniform model}

The \emph{Sackin index}  \cite{Sackin72} of a phylogenetic tree $T\in \TT_n$ is defined as the sum of the depths of its leaves:
$$
S(T)=\sum_{i=1}^n\delta_T(i).
$$
Alternatively, 
$$
S(T)=\sum_{v\in V_{int}(T)} \kappa_T(v).
$$

Let $S_n$ be the random variable that chooses a tree $T\in \TB_n$  and computes its Sackin index $S(T)$. Since, under the uniform model, all trees in $\TT_n$ have probability $1/((2n-3)!!)$,  the expected value of $S_n$  under the uniform model is
$$
\frac{\sum_{T\in \TB_n} S(T)}{(2n-3)!!}.
$$
So, we need to compute the numerator in this fraction. Now, for every $k=1,\ldots,n-1$, let
$$
c_{k,n}=|\{ T\in \TB_n\mid \delta_T(1)=k\}|.
$$

\begin{lemma}
For every $n\geq 3$, $\displaystyle \sum_{T\in \TB_n} S(T)=n\sum_{k=1}^{n-1} k\cdot c_{k,n}$
\end{lemma}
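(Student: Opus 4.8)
The plan is to swap the order of summation and then exploit the symmetry of $\TT_n$ under relabeling of leaves. Starting from the definition $S(T)=\sum_{i=1}^n\delta_T(i)$, I would write
$$
\sum_{T\in\TT_n}S(T)=\sum_{T\in\TT_n}\sum_{i=1}^n\delta_T(i)=\sum_{i=1}^n\Big(\sum_{T\in\TT_n}\delta_T(i)\Big),
$$
so the whole statement reduces to showing that the inner sum $\sum_{T\in\TT_n}\delta_T(i)$ does not depend on the leaf label $i$, and that for $i=1$ it equals $\sum_{k=1}^{n-1}k\cdot c_{k,n}$.

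The second point is immediate: by definition $c_{k,n}$ counts the trees $T\in\TT_n$ with $\delta_T(1)=k$, and in a binary phylogenetic tree with $n\geq 2$ leaves every leaf has depth between $1$ (the root is internal) and $n-1$ (attained by the caterpillar), so partitioning $\TT_n$ according to the value of $\delta_T(1)$ gives $\sum_{T\in\TT_n}\delta_T(1)=\sum_{k=1}^{n-1}k\cdot c_{k,n}$.

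For the first point, fix $i\in\{1,\ldots,n\}$ and let $\sigma$ be any permutation of $\{1,\ldots,n\}$ with $\sigma(i)=1$. Relabeling the leaves of a tree according to $\sigma$ is a bijection $\TT_n\to\TT_n$ that does not change the underlying tree shape, hence preserves the depth of each (relabeled) leaf; concretely, if $T'$ denotes the image of $T$, then $\delta_{T'}(1)=\delta_T(i)$. Summing over all $T\in\TT_n$ and using that $T\mapsto T'$ is a bijection yields $\sum_{T\in\TT_n}\delta_T(i)=\sum_{T\in\TT_n}\delta_T(1)$. Combining the three displays gives
$$
\sum_{T\in\TT_n}S(T)=\sum_{i=1}^n\sum_{T\in\TT_n}\delta_T(1)=n\sum_{k=1}^{n-1}k\cdot c_{k,n},
$$
as claimed. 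There is no real obstacle here; the only point that needs a careful (but routine) word is that the $S_n$-action on labels restricts to a bijection of $\TT_n$ — this is where the uniform model implicitly enters, since all trees are weighted equally — and that this bijection carries $\delta_T(i)$ to $\delta_{T'}(1)$.
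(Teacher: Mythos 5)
Your proposal is correct and follows essentially the same route as the paper: swap the order of summation and use the relabeling symmetry of $\TT_n$ to reduce $\sum_{T}\delta_T(i)$ to $\sum_{T}\delta_T(1)=\sum_k k\cdot c_{k,n}$. The only difference is that you spell out the bijection argument that the paper merely asserts, which is a harmless (and welcome) elaboration.
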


\begin{proof}
Notice that, for every $1\leq i\leq n$,
$$
|\{ T\in \TT_n\mid \delta_T(i)=k\}|=|\{ T\in \TT_n\mid \delta_T(1)=k\}|.
$$
Then
$$
\begin{array}{rl}
\displaystyle \sum_{T\in \TB_n} S(T) & \displaystyle =\sum_{T\in \TB_n}\sum_{i=1}^{n} \delta_T(i) =\sum_{i=1}^{n}\sum_{T\in \TB_n} \delta_T(i)\\
& \displaystyle  =\sum_{i=1}^{n}\sum_{k=1}^{n-1} k\cdot |\{ T\in \TT_n\mid \delta_T(i)=k\}|\\
& \displaystyle =\sum_{i=1}^{n}\sum_{k=1}^{n-1} k\cdot |\{ T\in \TT_n\mid \delta_T(1)=k\}|  =n\sum_{k=1}^{n-1} k\cdot c_{k,n}
\end{array}
$$
\end{proof}

\begin{lemma}
For every $n\geq 2$ and $k=1,\ldots,n-1$,
$$
c_{k,n}= \frac{(2n-k-3)!k}{(n-k-1)!2^{n-k-1}}.
$$
\end{lemma}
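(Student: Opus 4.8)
The plan is to set up a bijection between $\{T\in\TB_n\mid \delta_T(1)=k\}$ and the set of ordered $k$-forests on an $(n-1)$-element set, and then read $c_{k,n}$ off the closed formula for $|\FF_{m,n}|$ recalled in Section~2. First I would unwind what the condition $\delta_T(1)=k$ means structurally: in such a tree $T$, the unique path from the root to the leaf $1$ runs through internal nodes $u_0=r,u_1,\dots,u_{k-1}$ and ends at $u_k=1$, and for each $i=0,\dots,k-1$ the node $u_i$ has, besides the child $u_{i+1}$ on this path, exactly one further child, which is the root of a subtree $T_i$ of $T$ whose leaf set $S_i$ is a nonempty subset of $\{2,\dots,n\}$. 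Since every leaf of $T$ other than $1$ lies in exactly one of these subtrees, the tuple $(S_0,\dots,S_{k-1})$ is an ordered partition of $\{2,\dots,n\}$ into nonempty blocks, so $(T_0,\dots,T_{k-1})$ is an ordered $k$-forest on $\{2,\dots,n\}$.

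Conversely, from any ordered $k$-forest $(T_0,\dots,T_{k-1})$ on $\{2,\dots,n\}$ I would reconstruct $T$ by laying down a path of internal nodes $u_0,\dots,u_{k-1}$, attaching the leaf $1$ as the remaining child of $u_{k-1}$, and attaching the root of each $T_i$ as the remaining child of $u_i$; this yields a well-defined binary tree in $\TB_n$ with $\delta_T(1)=k$. These two constructions are visibly mutually inverse, so $c_{k,n}$ equals the number of ordered $k$-forests on an $(n-1)$-element set, that is, $c_{k,n}=|\FF_{k,n-1}|$. Since $1\le k\le n-1$, the formula $|\FF_{m,n}|=(2n-m-1)!\,m/((n-m)!\,2^{n-m})$ from Section~2 applies with $m=k$ and with $n$ replaced by $n-1$, and it gives
$$
c_{k,n}=|\FF_{k,n-1}|=\frac{\bigl(2(n-1)-k-1\bigr)!\,k}{\bigl((n-1)-k\bigr)!\,2^{(n-1)-k}}=\frac{(2n-k-3)!\,k}{(n-k-1)!\,2^{n-k-1}},
$$
as claimed.

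The only real content — and the step I would be most careful with — is verifying that this correspondence is genuinely a bijection, with nothing over- or under-counted. The delicate point is twofold: at each $u_i$ we must \emph{not} distinguish which of its two children is ``left'' and which is ``right'' (phylogenetic trees are taken up to isomorphism), so no spurious factor $2^{k}$ appears; yet the forest $(T_0,\dots,T_{k-1})$ \emph{must} be recorded as an ordered tuple, because the attachment points $u_0,\dots,u_{k-1}$ are pairwise distinguishable by their depths along the path to the leaf $1$. Once this is spelled out, the extreme cases serve as consistency checks: $k=1$ forces the whole of $\{2,\dots,n\}$ to hang below the root as a single subtree opposite the leaf $1$, and $k=n-1$ forces $T$ to be a caterpillar with the leaves $2,\dots,n$ laid out in order along the path, giving $c_{n-1,n}=(n-1)!$, in agreement with the formula.
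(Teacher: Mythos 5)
Your proof is correct and is essentially the paper's own argument: the paper likewise identifies each tree with $\delta_T(1)=k$ with the ordered $k$-forest of subtrees hanging off the root-to-leaf-$1$ path (illustrated by a figure) and concludes $c_{k,n}=|\FF_{k,n-1}|$ from the formula in Section~2. Your write-up simply spells out the bijection and its inverse more explicitly than the paper does.
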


\begin{proof}
To compute $c_{k,n}$ for $k\geq 1$, notice that every tree $T\in \TB_n$  such that $\delta(1)=k$ will have the form described in Fig. \ref{fig:forests}. Therefore, it is determined by the ordered $k$-forest $T_1,T_2,\ldots,T_{k}$ on $\{2,\ldots,n\}$, and thus
$$
c_{k,n}=|\FF_{k,n-1}|= \frac{(2n-k-3)!k}{(n-k-1)!2^{n-k-1}}.
$$

\begin{figure}[htb]
\begin{center}
\begin{tikzpicture}[thick,>=stealth,scale=0.4]
\draw(-1,3) node[tre] (1) {}; \etq 1
\draw(2,4) node[tre] (x1) {}; 
\draw (x1)--(1);
\draw (x1)--(3,5);
\draw (3.3,5.3) node {.};
\draw (3.6,5.6) node {.};
\draw (3.9,5.9) node {.};
\draw(5,7) node[tre] (x2) {}; 
\draw (4.2,6.2)--(x2);
\draw(7,9) node[tre] (r) {}; 
\draw (x2)--(r);
\draw(5,3) node[tre] (tk) {}; 
\draw (tk)--(3.6,0)--(6.4,0)--(tk);
\draw(5,1) node  {\footnotesize $T_{k}$};
\draw (x1)--(tk);
\draw(8,6) node[tre] (t2) {}; 
\draw (t2)--(6.6,3)--(9.4,3)--(t2);
\draw(8,4) node  {\footnotesize $T_{2}$};
\draw (x2)--(t2);
\draw(10,8) node[tre] (t1) {}; 
\draw (t1)--(8.6,5)--(11.4,5)--(t1);
\draw(10,6) node  {\footnotesize $T_{1}$};
\draw (r)--(t1);
\end{tikzpicture}
\end{center}
\caption{\label{fig:forests} 
The structure of a tree $T$ with $\delta_T(1)=k$.}
\end{figure}
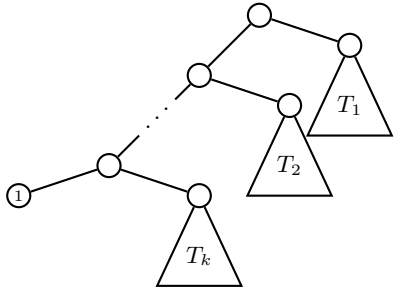
\end{proof}

Now, recall that the (\emph{generalized}) \emph{hypergeometric function} ${}_pF_q$ is defined  \cite{Bayley} as 
$$
{}_pF_q\bigg(\begin{array}{rrr} a_1,&\ldots, &a_p \\[-0.5ex] b_1,& \ldots,  &b_q\end{array};z\bigg)=\sum_{k\geq 0} \frac{(a_1)_k\cdots (a_p)_k}{(b_1)_k\cdots (b_q)_k}\cdot \frac{z^k}{k!},
$$ 
where $(a)_k := a\cdot (a+1)\cdots (a+k-1)$. Many popular software systems, like Mathematica or R, have implementations of these functions.

\begin{theorem}
The expected value of the random variable $S_n$ under the uniform model  is
$$
E_U(S_n)=\frac{n}{2n-3} {}_3 F_2
\bigg(\begin{array}{l} 2,\ 2,\ 2-n \\[-0.5ex] 1,\  4-2n\end{array};2\bigg)
$$
\end{theorem}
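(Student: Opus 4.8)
The plan is to combine the two lemmas just proved: from the expression for $\sum_{T\in\TB_n}S(T)$ in terms of the counts $c_{k,n}$, together with the closed form for $c_{k,n}$, write
$$
E_U(S_n)=\frac{n}{(2n-3)!!}\sum_{k=1}^{n-1}k\cdot c_{k,n}
=\frac{n}{(2n-3)!!}\sum_{k=1}^{n-1}\frac{(2n-k-3)!\,k^2}{(n-k-1)!\,2^{n-k-1}}.
$$
The whole argument then reduces to recognizing this finite sum as a value of a ${}_3F_2$. First I would substitute $k=n-1-j$ (so that $j$ runs from $0$ to $n-2$) to move the factorials into the ``forward'' shape $(a)_j$ demanded by the hypergeometric series; this turns the summand into something with a factor $(n-1-j)^2$, a factorial $(n-2+j)!$ in the numerator and $j!$ in the denominator, times $2^{j}$. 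Pulling out the $j=0$ term as the normalizing constant, I would compute the ratio of consecutive summands $t_{j+1}/t_j$ as a rational function of $j$: each Pochhammer factor $(a)_j$ contributes a linear factor $(a+j)$ to that ratio, so reading off the numerator and denominator linear factors of $t_{j+1}/t_j$ directly gives the upper parameters $a_i$, the lower parameters $b_i$, and the argument $z$.

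Concretely, $(n-2+j)!=(n-2)!\,(n-1)_j$ supplies an upper parameter $n-1$; the square $(n-1-j)^2$ is, up to sign, $(-(n-1)+j)^2=((2-n)_j$-type$)$, more precisely $(n-1-j)^2=(2-n)_j^2/(1-n)_j^2\cdot(\text{const})$ — here one uses the identity $(a)_j/(a+1)_j=a/(a+j)$ to see that a falling factor like $(n-1-j)$ is $(2-n)_j/(1-n)_j$ times $(n-1)$ — so the factor $(n-1-j)^2$ contributes the repeated upper parameter $2$ (twice) against a repeated lower parameter that cancels part of what the plain factorials give. The $2^{j}$ is the argument $z=2$. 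Carefully bookkeeping which linear factors survive, one is left with upper parameters $2,2,2-n$ and lower parameters $1,4-2n$, and prefactor $\dfrac{n}{2n-3}$ after simplifying $(2n-3)!!=(2n-3)!/((n-2)!\,2^{n-2})$ against the $j=0$ term. I would present this as: verify $t_0=\dfrac{n}{2n-3}$ (the claimed prefactor) and verify
$$
\frac{t_{j+1}}{t_j}=\frac{(2+j)(2+j)(2-n+j)}{(1+j)(4-2n+j)}\cdot\frac{2}{1+j},
$$
which is exactly the term ratio of the asserted ${}_3F_2(2,2,2-n;1,4-2n;2)$, and hence the two series agree term by term.

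The main obstacle I anticipate is purely bookkeeping: correctly converting the backward factorials $(2n-k-3)!$, $(n-k-1)!$ and the square $k^2$ into forward Pochhammer symbols with the right parameters and no sign errors, and making sure the $4-2n$ in the lower row does not vanish or produce a spurious pole for the range of $j$ in play (it does not, since the series terminates at $j=n-2$ because of the $(2-n)_j$ factor, well before $4-2n+j$ could hit zero). A secondary check worth including is numerical agreement for small $n$ (say $n=3,4,5$) against direct enumeration, and against the known asymptotics $E_U(S_n)\sim\sqrt{\pi}\,n^{3/2}$, to catch any transcription slip. Once the term-ratio identity is verified the theorem follows immediately, so essentially all the content is in that one algebraic simplification.
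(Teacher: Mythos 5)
Your overall plan coincides with the paper's: substitute the closed form for $c_{k,n}$ into $E_U(S_n)=\frac{n}{(2n-3)!!}\sum_{k}k\,c_{k,n}$, normalize by the first term, and identify the resulting finite sum as a ${}_3F_2$ via its consecutive-term ratio. The value $t_0=\frac{n}{2n-3}$ and the term ratio you state at the end are both correct --- but they belong to the sum re-indexed by the \emph{unit shift} $k=j+1$, not by the \emph{reversal} $k=n-1-j$ that you prescribe. Under $k=n-1-j$ the summand is $\frac{(n-2+j)!\,(n-1-j)^2}{j!\,2^{j}}$ (note: \emph{divided} by $2^{j}$, not multiplied), and its term ratio is
$$
\frac{t_{j+1}}{t_j}=\frac{(n-1+j)\,(2-n+j)\,(2-n+j)}{(1-n+j)\,(1-n+j)}\cdot\frac{1/2}{1+j},
$$
so the reversed sum is (a constant times) ${}_3F_2(n-1,\,2-n,\,2-n;\,1-n,\,1-n;\,1/2)$: a different representation, with argument $1/2$ and no upper parameter $2$ anywhere. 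Your own intermediate bookkeeping already shows this --- you extract an upper parameter $n-1$ from $(n-2+j)!=(n-2)!\,(n-1)_j$ and factors $(2-n)_j/(1-n)_j$ from $n-1-j$, none of which occur in the target list $2,2,2-n;\ 1,4-2n$ --- and the jump from there to ``the factor $(n-1-j)^2$ contributes the repeated upper parameter $2$'' is unjustified; that is the step that fails. Had you actually computed $t_{j+1}/t_j$ for the reversed sum, you would not have obtained the ratio you assert.

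The repair is immediate and is what the paper does: put $k=j+1$ with $j=0,\dots,n-2$. Then $k^2=(j+1)^2=\frac{(2)_j(2)_j}{(1)_j(1)_j}$, with one copy of $(1)_j=j!$ absorbed as the obligatory $j!$ of the hypergeometric series and the other surviving as the lower parameter $1$; the quotient $(2n-j-4)!/(n-j-2)!$ contributes $(2-n)_j/(4-2n)_j$ after flipping signs (legitimate because numerator and denominator each carry exactly $j$ linear factors); and $2^{-(n-j-2)}=2^{-(n-2)}\cdot 2^{j}$ supplies the argument $z=2$. With that substitution your checks of $t_0$ and of the term ratio go through verbatim and the theorem follows. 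So the error is localized and easily fixed, but as written the central identification of the parameters does not follow from the substitution you chose; note also that your proposed numerical check of the final formula against enumeration would not detect this, since the formula itself is correct --- only re-deriving the term ratio from your own re-indexed summand would.
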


\begin{proof}
As we have already mentioned,
$$
E_U(S_n)=\frac{\sum_{T\in \TB_n} S(T)}{(2n-3)!!}=\frac{n}{(2n-3)!!} \sum_{k=1}^{n-1} k\cdot c_{k,n}
=\frac{n}{(2n-3)!!} \sum_{k=1}^{n-1}  \frac{(2n-k-3)!k^2}{(n-k-1)!2^{n-k-1}}
$$
Now
$$
\begin{array}{rl}
\displaystyle \frac{nk^2(2n-k-3)!}{(2n-3)!!(n-k-1)!2^{n-k-1}} & \displaystyle =
  \frac{nk^2(2n-k-3)!2^{n-2}(n-2)!}{(2n-3)!(n-k-1)!2^{n-k-1}}\\[2ex] & \displaystyle=
  \frac{nk^2(2n-k-3)!2^{n-2}(n-2)!k!}{(2n-3)!(n-k-1)!2^{n-k-1}k!} \\[2ex] & \displaystyle=
 \frac{nk^22^{k-1}\binom{n-1}{k}}{(n-1)\binom{2n-3}{k}}
\end{array}
$$
and thus
$$
\begin{array}{rl}
E_U(S_n) & = \displaystyle
\frac{n}{n-1}\sum_{k=1}^{n-1}  k^22^{k-1}\cdot \frac{\binom{n-1}{k}}{\binom{2n-3}{k}}\\
 & = \displaystyle\frac{n}{n-1} \sum_{k=1}^{n-1} \frac{k^22^{k-1}(n-1)(n-2)(n-3) \cdots (n-k)}{(2n-3)(2n-4)(2n-5)\cdots (2n-k-2)}\\
 & = \displaystyle\frac{n}{2n-3} \sum_{k=1}^{n-1} \frac{k^22^{k-1}(n-2)(n-3) \cdots (n-k)}{(2n-4)(2n-5)\cdots (2n-k-2)}\\
 & = \displaystyle\frac{n}{2n-3} \sum_{k=1}^{n-1} \frac{k^22^{k-1}(2-n)(2-n+1)\cdots (-n+k)}{(4-2n)(4-2n+1)\cdots (2-2n+k)}\\
 & = \displaystyle\frac{n}{2n-3} \sum_{k=0}^{n-2} \frac{(k+1)^22^k(2-n)(2-n+1)\cdots (1-n+k)}{(4-2n)(4-2n+1)\cdots (3-2n+k)}\\
 & = \displaystyle\frac{n}{2n-3} \sum_{k\geq 0} \frac{((k+1)!)^2(2-n)(2-n+1)\cdots(1-n+k)\cdot 2^k}{(k!)^2(4-2n)(4-2n+1)\cdots (3-2n+k)}\\
 & = \displaystyle\frac{n}{2n-3} \sum_{k\geq 0} \frac{(2)_k(2)_k(2-n)_k}{(1)_k((4-2n)_k}\cdot \frac{2^k}{k!}  =\dfrac{n}{2n-3} {}_3 F_2
\bigg(\begin{array}{l} 2,\ 2,\ 2-n \\[ -0.5ex] 1,\  4-2n\end{array};2\bigg)
\end{array}
$$
as we claimed.
\end{proof}

\section{Conclusion}
In this paper we have obtained explicit formulas for the expected value of the Sackin index under the uniform model and the Colless index under the Yule model. These results add up to the already known expected value of the Sackin index under the Yule model \cite{KiSl:93}.
For any $n$, these expected values are easily computed directly using for instance the software system R, and can be used instead of their estimations in packages like apTreeshape \cite{apTs} or SymmeTREE \cite{symT}.

It remains open the problem of finding an explicit formula for the expected value of the Colless index under the uniform model.
\medskip

\noindent\textbf{Acknowledgements.}  The research reported in this paper has been partially supported by the Spanish government and the UE FEDER program, through projects MTM2009-07165 and TIN2008-04487-E/TIN. We thank G. Cardona and M. Llabr\'es for several comments on this work.

\end{document}